\documentclass[letterpaper, 10 pt, conference]{ieeeconf}

\IEEEoverridecommandlockouts
\let\labelindent\relax
\usepackage{enumitem}
\usepackage{amsmath,amssymb,amsthm}
\usepackage{bm}
\usepackage{mathtools}
\usepackage{graphicx}
\usepackage{cite}
\usepackage{xcolor}
\usepackage{algorithm}
\usepackage{algpseudocode}

\usepackage{fancyhdr}

\usepackage{epstopdf} 
\usepackage{style}
\usepackage{dsfont}
\usepackage{hyperref}
\hypersetup{
	colorlinks,
	linkcolor={blue!70!black},
	citecolor={blue!70!black},
	urlcolor={blue!70!black}
}
\allowdisplaybreaks
\usepackage{tikz}
\usetikzlibrary{calc,patterns,decorations.pathmorphing,decorations.markings,decorations.pathreplacing,shapes.geometric,arrows.meta,positioning,fit,backgrounds}

\newtheorem{remark}{\indent Remark}

\title{Nonlinear Moving-Horizon Estimation\\ Using State- and Control-Dependent Models}

\author{Mohammadreza Kamaldar$^{1}$
	\thanks{$^{1}$M. Kamaldar is with the Department of Mechanical, Aerospace \& Biomedical Engineering, University of South Alabama, Mobile, AL 36688, USA.
		{\tt\small mkamaldar@southalabama.edu}}%
}
\begin{document}
	\maketitle
	\thispagestyle{empty}
	
	\begin{abstract}
		This paper presents a state- and control-dependent moving-horizon estimation (SCD-MHE) algorithm for nonlinear discrete-time systems. The nonlinear dynamics and measurement maps are written exactly in pseudo-linear form using state- and control-dependent coefficient (SCDC) matrices. At each time step, the moving-horizon estimation problem is solved by a sequence of sparse quadratic programs, where the SCDC matrices are refrozen along the trajectory computed by the preceding iteration; the estimator requires no Jacobians and retains the nonlinear model exactly. We show that each quadratic program has a unique solution, characterize the fixed points of the iteration, establish geometric convergence under a contraction condition, and prove that the estimation error of the computed estimate is uniformly bounded under uniform observability, bounded disturbances, a bounded arrival-cost error, and iterates confined to a compact set, for all finite iteration counts. In a quadrotor benchmark with a saturating rangefinder, the altitude RMSE of SCD-MHE is 18 times smaller than that of a nonlinear moving-horizon estimator that solves the nonlinear program and 58 times smaller than that of the extended and unscented Kalman filters, and its per-step computation time is 34 times smaller than that of the nonlinear estimator.
	\end{abstract}
	
	\begin{keywords}
		nonlinear state estimation, moving-horizon estimation (MHE), extended Kalman filter (EKF), unscented Kalman filter (UKF), constrained optimization.
	\end{keywords}
	
	\section{Introduction}
	
	Feedback control requires an estimate of the state \cite{Simon2006}. For linear systems driven by Gaussian noise, the Kalman filter is the optimal Bayesian estimator, and the extended Kalman filter (EKF) extends it to nonlinear systems by local linearization \cite{Jazwinski1970}. The Jacobians discard the higher-order terms of the model, and the filter can diverge \cite{reif1999extended}. The unscented Kalman filter (UKF) replaces the Jacobians with sigma points that are propagated through the nonlinear maps \cite{Wan2000}.
	
	Moving-horizon estimation (MHE) instead solves a constrained optimization problem over a sliding window of recent measurements \cite{Rao2003, Alessandri2008}. The problem accommodates constraints, and an arrival cost summarizes the data that have left the window; the stability analysis of the estimator rests on this arrival cost \cite{Rao2003, rawlings2017model}. For a nonlinear model, the MHE problem is a nonconvex nonlinear program that is solved at each sampling instant. Real-time iteration schemes perform one Newton-type step per sample \cite{diehl2002real} and require the Jacobians of the model.
	
	This paper develops a state- and control-dependent moving-horizon estimator (SCD-MHE). The nonlinear dynamics and measurement maps are written exactly in pseudo-linear form using state- and control-dependent coefficient (SCDC) matrices. This parameterization originates in state-dependent Riccati equation (SDRE) control \cite{cimen2008sdre}. Here it converts the nonlinear MHE problem into a sequence of sparse quadratic programs: at each time step, the SCDC matrices are evaluated along the trajectory computed by the preceding iteration, the quadratic program is solved, and the trajectory is refined, as in the iterated Kalman filter, which is a Gauss--Newton iteration \cite{BellCathey1993}. No Jacobians are formed, and no term of the model is truncated.
	
	The contributions are threefold. First, we formulate SCD-MHE with warm starting, a displacement-based stopping rule, and a Riccati-type arrival cost update, and we show that each quadratic program has a unique solution. Second, we characterize the fixed points of the iteration, prove geometric convergence under a contraction condition, and prove that the estimation error of the computed estimate is uniformly bounded under uniform observability, bounded disturbances, a bounded arrival-cost error, and iterates confined to a compact set, for all finite iteration counts. Third, on a quadrotor benchmark with a saturating rangefinder, SCD-MHE attains a lower RMSE than the EKF, the UKF, and a nonlinear MHE that solves \eqref{eq:MHSE_cost}, at a per-step computation time 34 times smaller than that of the nonlinear MHE.
	
	\textbf{Notation.} Let $\BBN \triangleq \{0,1,2,\ldots\}$. The $n\times n$ identity matrix is $I_n$, and the zero vector in $\BBR^m$ is $0_m$. For symmetric $S, S_1, S_2 \in \BBR^{n \times n}$, $S \succ 0$ ($S \succeq 0$) indicates that $S$ is positive definite (positive semidefinite), and $S_1 \succ S_2$ ($S_1 \succeq S_2$) indicates that $S_1 - S_2 \succ 0$ ($S_1 - S_2 \succeq 0$). For $x\in\BBR^n$, $\|x\| \triangleq \sqrt{x^\rmT x}$, and, for $W \succ 0$, $\|x\|_W \triangleq \sqrt{x^\rmT W x}$. For $M \in \BBR^{m \times n}$, $\|M\|$ is the induced $2$-norm. For all $x \in \BBR^d$ and all $r \in (0, \infty)$, $\bar{\mathcal{N}}_r(x) \triangleq \{y \in \BBR^d \colon \|y - x\| \le r\}$.
	
	\section{Problem Statement}

	Consider the nonlinear discrete-time system
	\begin{align}
		x_{k+1} & = f(x_k,u_k,k) + w_k,\label{eq_sys1}\\
		y_k & = h(x_k,k) + v_k,\label{eq_meas1}
	\end{align}
	where, for all $k\in\BBN$, $x_k \in \BBR^n$ is the state, $u_k \in \mathcal{U}$ is the known control input, $\mathcal{U} \subset \BBR^m$ is the set of admissible inputs, $y_k \in \BBR^p$ is the measured output, $w_k\in \BBR^n$ is the unknown process disturbance, and $v_k \in\BBR^p$ is the unknown measurement noise. Furthermore, $f\colon \BBR^n \times \BBR^m \times \BBN \to \BBR^n$ and $h\colon \BBR^n \times \BBN\to\BBR^p$ are the state transition and measurement maps. For all $k \in \BBN$, let $Q_k\in\BBR^{n\times n}$ and $R_k\in\BBR^{p\times p}$ be positive definite weighting matrices; in the stochastic interpretation of the cost below, they are the covariances of $w_k$ and $v_k$ \cite{rawlings2017model}, whereas Section~\ref{sec:analysis} treats $w_k$ and $v_k$ as bounded deterministic sequences.
	
	Let $\ell\ge 2$ denote the horizon length. For all $k\ge \ell$, the estimator processes the measurements $y_{k+1-\ell}, \dots, y_k$ and the inputs $u_{k+1-\ell}, \dots, u_{k-1}$, and it estimates the states $x_{k+1-\ell},\ldots,x_k$. Within the window at time step $k$, offset $j \in \{1-\ell, \dots, 0\}$ corresponds to time step $k+j$. Let $n_z \triangleq n\ell + n(\ell-1) + p\ell$, and define the decision vector  $\zeta \in \BBR^{n_z}$ by
	\begin{equation}
		\begin{split}
			\zeta \triangleq \big[ &\chi_{1-\ell}^\rmT \ \cdots \ \chi_{0}^\rmT \ \omega_{1-\ell}^\rmT \ \cdots \ \omega_{-1}^\rmT \\
			&\nu_{1-\ell}^\rmT \ \cdots \ \nu_{0}^\rmT \big]^\rmT, \label{eq:zeta}
		\end{split}
	\end{equation}
	where, for all $j \in \{1-\ell, \dots, 0\}$, $\chi_j \in \BBR^n$ and $\nu_j \in \BBR^p$ are the state and measurement noise variables, and, for all $j \in \{1-\ell, \dots, -1\}$, $\omega_j \in \BBR^n$ is the process disturbance variable. Let $\chi \triangleq [\chi_{1-\ell}^\rmT \ \cdots \ \chi_{0}^\rmT]^\rmT \in \BBR^{n\ell}$ denote the state block of $\zeta$.
	
	For all $k \ge \ell$, let the moving-horizon cost $J_k\colon \BBR^{n_z}\to[0,\infty)$ be defined by
	\begin{align}
		J_k (\zeta) & \triangleq  \|\chi_{1-\ell} - \bar{x}_{k+1-\ell}\|^2_{P_{k+1-\ell}^{-1}} \notag \\
		& \quad + \sum_{j=1-\ell}^{-1} \|\omega_{j}\|^2_{Q_{k+j}^{-1}} + \sum_{j=1-\ell}^{0} \|\nu_{j}\|^2_{R_{k+j}^{-1}},
		\label{eq_JK}
	\end{align}
	where $\bar{x}_{k+1-\ell}\in\BBR^n$ is the arrival cost vector, which summarizes all data processed before the current window, and $P_{k+1-\ell}\in\BBR^{n\times n}$ is the positive definite arrival cost covariance. For all $k \ge \ell$, let $\mathcal{Z}_k \subset \BBR^{n_z}$ denote the set of all $\zeta \in \BBR^{n_z}$ that satisfy
	\begin{align}
		\chi_{j+1} &= f(\chi_j,u_{k+j},k+j) + \omega_j, &&  j \in \{1-\ell, \dots, -1\}, \label{eq:MHSE_dynamics}\\
		y_{k+j} &= h(\chi_j,k+j) + \nu_j, && j \in \{1-\ell, \dots, 0\}. \label{eq:MHSE_measurement}
	\end{align}
	For all $k \ge \ell$, the moving-horizon estimation problem is
	\begin{equation}
		\underset{\zeta \in \mathcal{Z}_k}{\text{minimize}} \ J_k(\zeta). \label{eq:MHSE_cost}
	\end{equation}
	Since \eqref{eq:MHSE_dynamics} and \eqref{eq:MHSE_measurement} are nonlinear in $\chi_j$, \eqref{eq:MHSE_cost} is in general a nonconvex nonlinear program.
	
	
	\section{State- and Control-Dependent Moving-Horizon Estimation}
	\label{sec:scdmhe}
	We avoid the nonconvex program \eqref{eq:MHSE_cost} by writing the nonlinear constraints in pseudo-linear form. Let the matrix functions $A\colon \BBR^n \times \BBR^m \times \BBN \rightarrow \BBR^{n \times n}$, $B\colon \BBR^n \times \BBR^m \times \BBN \rightarrow \BBR^{n \times m}$, and $C\colon \BBR^n \times \BBN \rightarrow \BBR^{p \times n}$ be such that, for all $x \in \BBR^n$, all $u \in \mathcal{U}$, and all $k \in \BBN$,
	\begin{align}
		f(x,u,k) &= A(x,u,k)\,x + B(x,u,k)\,u, \label{eq:scdc_f} \\
		h(x,k) &= C(x,k)\,x. \label{eq:scdc_h}
	\end{align}
	The state- and control-dependent coefficient (SCDC) matrices $A$, $B$, and $C$ represent $f$ and $h$ without truncation. Equation \eqref{eq:scdc_h} requires that, for all $k \in \BBN$, $h(0_n,k) = 0_p$, and, if $0_m \in \mathcal{U}$, then \eqref{eq:scdc_f} requires that, for all $k\in\BBN$, $f(0_n,0_m,k) = 0_n$. Under these conditions, the factorization exists if $f$ is continuously differentiable in $(x,u)$ and $h$ in $x$, and it is in general not unique \cite{cimen2008sdre}. If $0_m \notin \mathcal{U}$, then \eqref{eq:scdc_f} holds, for example, with $A = 0_{n\times n}$ and $B(x,u,k) = f(x,u,k)u^\rmT/\|u\|^2$.
	
	Since the SCDC matrices depend on the unknown state, SCD-MHE evaluates them along a trajectory computed at the preceding iteration, and the resulting quadratic program yields a refined trajectory. Let $\rho \ge 1$ denote the maximum number of iterations at each time step. For all $k \ge \ell$ and all $i \in \{0,\ldots,\rho\}$, let
	\begin{equation}
		\hat{x}_{k|i} \triangleq \big[ \hat{x}_{k,1-\ell|i}^\rmT \ \cdots \ \hat{x}_{k,0|i}^\rmT \big]^\rmT \in \BBR^{n\ell} \label{eq:xhat_stack}
	\end{equation}
	denote the state trajectory computed at iteration $i$, where, for all $j \in \{1-\ell,\ldots,0\}$, $\hat x_{k,j|i} \in \BBR^n$ is the estimate of $x_{k+j}$, and $\hat x_{k|0}$ is the initial trajectory provided by the warm start of Section~\ref{sec:warm}.
	
	For all $k \ge \ell$ and all $\xi \triangleq [\xi_{1-\ell}^\rmT \ \cdots \ \xi_0^\rmT]^\rmT \in \BBR^{n\ell}$, where, for all $j\in\{1-\ell,\ldots,0\}$, $\xi_j \in \BBR^n$, let the SCDC matrices along $\xi$ be defined by
	\begin{gather}
		A_{k,j}(\xi) \triangleq A(\xi_j, u_{k+j}, k+j), \qquad C_{k,j}(\xi) \triangleq C(\xi_j, k+j), \nn\\
		B_{k,j}(\xi) \triangleq B(\xi_j, u_{k+j}, k+j), \label{eq:scdc_along}
	\end{gather}
	and let $\mathcal{Z}_k(\xi) \subset \BBR^{n_z}$ denote the set of all $\zeta \in \BBR^{n_z}$ that satisfy the pseudo-linear equality constraints
	\begin{align}
		\chi_{j+1} & = A_{k,j}(\xi) \chi_j + B_{k,j}(\xi)u_{k+j} + \omega_j, \label{eq:iscd_MHSE_dynamics}\nn \\
		&\qquad\qquad j\in\{1-\ell,\ldots, -1\},\\
		y_{k+j} & = C_{k,j}(\xi)\chi_j + \nu_j, \label{eq:iscd_MHSE_measurement}\nn \\
		&\qquad\qquad j\in\{1-\ell,\ldots, 0\}.
	\end{align}
	For all $k\ge\ell$ and all $i\in\{1,\ldots,\rho\}$, the iterate $\hat z_{k|i} \in \BBR^{n_z}$ is the solution to the quadratic program
	\begin{equation}
		\hat z_{k|i} \triangleq \underset{\zeta \in \mathcal{Z}_k(\hat x_{k|i-1})}{\text{argmin}} \ J_k(\zeta), \label{eq:iscd_MHSE_cost_iter}
	\end{equation}
	where Lemma~\ref{lem:qp_unique} below shows that the minimizer is unique. The iterate is partitioned as in \eqref{eq:zeta}: its state block is $\hat x_{k|i}$, and, for all $j \in \{1-\ell, \dots, -1\}$ and all $j' \in \{1-\ell,\ldots,0\}$, its blocks $\hat{w}_{k,j|i} \in \BBR^n$ and $\hat{v}_{k,j'|i} \in \BBR^p$ are the process disturbance and measurement noise estimates. For brevity, $A_{k,j|i-1} \triangleq A_{k,j}(\hat x_{k|i-1})$, $B_{k,j|i-1} \triangleq B_{k,j}(\hat x_{k|i-1})$, and $C_{k,j|i-1} \triangleq C_{k,j}(\hat x_{k|i-1})$. Since $J_k$ is a convex quadratic function of $\zeta$ with a block-diagonal Hessian, and since \eqref{eq:iscd_MHSE_dynamics}--\eqref{eq:iscd_MHSE_measurement} are linear in $\zeta$ and block bidiagonal in the state columns, \eqref{eq:iscd_MHSE_cost_iter} is a quadratic program whose Karush--Kuhn--Tucker system is sparse with a number of nonzero entries proportional to $\ell$; with a sparse factorization, the cost of each iteration grows linearly with $\ell$. The Hessian is singular, since the intermediate state variables are unpenalized, and Lemma~\ref{lem:qp_unique} shows that the minimizer is unique.
	
	\subsection{Warm Start and Stopping Criteria}
	\label{sec:warm}
	For all $k \ge \ell$, let $i_{*,k}$ denote the final iteration index at time step $k$, which is defined in \eqref{eq:istar} below. For all $k\ge \ell+1$ and all $j \in \{1-\ell, \dots, -1\}$, the initial trajectory is constructed from the final trajectory of time step $k-1$ by the shift \eqref{eq:warm_shift} and the one-step prediction \eqref{eq:warm_predict}, that is,
	\begin{align}
		\hat{x}_{k, j|0} &= \hat{x}_{k-1, j+1|i_{*,k-1}}, \label{eq:warm_shift}\\
		\hat{x}_{k, 0|0} &= f(\hat{x}_{k-1, 0|i_{*,k-1}}, u_{k-1}, k-1). \label{eq:warm_predict}
	\end{align}
	At the first window $k = \ell$, $\hat x_{\ell|0}$ is obtained from a preliminary estimator, such as the EKF, or by forward simulation of \eqref{eq_sys1} from the prior estimate.
	
	For all $k\ge\ell$ and all $i\in\{1,\ldots,\rho\}$, let the trajectory displacement be defined by
	\begin{equation}
		\delta_{k|i} \triangleq \| \hat{x}_{k|i} - \hat{x}_{k|i-1} \|. \label{eq:delta}
	\end{equation}
	Let $\varepsilon \in (0, \infty)$ be a convergence tolerance. For all $k \ge \ell$, the iteration terminates at the final iteration index
	\begin{equation}
		i_{*,k} \triangleq \min\big( \{ i \in \{1,\ldots,\rho\} \colon \delta_{k|i} < \varepsilon\} \cup \{\rho\} \big), \label{eq:istar}
	\end{equation}
	that is, at the first iteration whose displacement falls below $\varepsilon$, or at iteration $\rho$ otherwise. For all $k \ge \ell$, the state estimate at time step $k$ is $\hat x_k \triangleq \hat x_{k,0|i_{*,k}}$.
	
	\subsection{Arrival Cost Update}
	\label{sec:arrival}
	The arrival cost of the first window, $\bar x_1$ and $P_1 \succ 0$, is initialized from the \textit{a priori} state estimate and its covariance. For all $k \ge \ell+1$, the arrival cost vector is the estimate of $x_{k+1-\ell}$ computed at the preceding time step, and the arrival cost covariance is updated by the discrete-time Riccati equation, that is,
	\begin{align}
		\bar{x}_{k+1-\ell} &= \hat{x}_{k-1, 2-\ell | i_{*,k-1}}, \label{eq:xbar_update}\\
		P_{k+1-\ell} &= \bar{A}_{k-\ell} P_{k-\ell} \bar{A}_{k-\ell}^\rmT + Q_{k-\ell}  - \bar{A}_{k-\ell} P_{k-\ell} \bar{C}_{k-\ell}^\rmT\nn \\
		& \quad \cdot \big( \bar{C}_{k-\ell} P_{k-\ell} \bar{C}_{k-\ell}^\rmT + R_{k-\ell} \big)^{-1} \bar{C}_{k-\ell} P_{k-\ell}\bar{A}_{k-\ell}^\rmT ,\label{eq_pk1}
	\end{align}
	where the SCDC matrices at the discarded time step $k-\ell$ are evaluated along the final trajectory of the preceding window, that is, $\bar{A}_{k-\ell} \triangleq A_{k-1,1-\ell|i_{*,k-1}}$ and $\bar{C}_{k-\ell} \triangleq C_{k-1,1-\ell|i_{*,k-1}}$. The recursion \eqref{eq_pk1} is the one-step-predicted covariance recursion of the Kalman filter for the pseudo-linear system $(\bar A_{k-\ell}, \bar C_{k-\ell})$, and thus $P_{k+1-\ell}$ can be interpreted as the covariance of $x_{k+1-\ell}$ given the data up to time step $k-\ell$, whereas $\bar x_{k+1-\ell}$ is the smoothed estimate from the data up to time step $k-1$; \eqref{eq:xbar_update}--\eqref{eq_pk1} is thus a smoothing-type approximation of the arrival cost \cite{Rao2003}. If $P_{k-\ell} \succ 0$, then it follows from the matrix inversion lemma that the sum of the first and third terms in \eqref{eq_pk1} equals $\bar A_{k-\ell}\big(P_{k-\ell}^{-1} + \bar C_{k-\ell}^\rmT R_{k-\ell}^{-1}\bar C_{k-\ell}\big)^{-1}\bar A_{k-\ell}^\rmT \succeq 0$, and thus $P_{k+1-\ell} \succeq Q_{k-\ell} \succ 0$; since $P_1 \succ 0$, it follows by induction that, for all $k \ge \ell$, $P_{k+1-\ell} \succ 0$. Algorithm~\ref{alg:iscd_mhse} summarizes SCD-MHE.
	
	\begin{algorithm}
		\caption{SCD-MHE with Warm Starting}
		\label{alg:iscd_mhse}
		\begin{algorithmic}
			\Require $\ell$, $\rho$, $\varepsilon$, prior arrival cost $\bar{x}_{1}$, $P_{1}$, initial trajectory $\hat x_{\ell|0}$.
			\For{$k = \ell, \ell+1, \dots$}
			\If{$k \ge \ell+1$} compute $\hat x_{k|0}$ by \eqref{eq:warm_shift}--\eqref{eq:warm_predict}.
			\EndIf
			\State $i \gets 0$.
			\Repeat
			\State $i \gets i+1$.
			\State Evaluate $A_{k,j|i-1}$, $B_{k,j|i-1}$, and $C_{k,j|i-1}$ along $\hat x_{k|i-1}$.
			\State Solve the QP \eqref{eq:iscd_MHSE_cost_iter} to obtain $\hat{z}_{k|i}$; extract $\hat x_{k|i}$.
			\State $\delta_{k|i} \gets \| \hat{x}_{k|i} - \hat{x}_{k|i-1} \|$.
			\Until{$\delta_{k|i} < \varepsilon$ or $i = \rho$}
			\State $i_{*,k} \gets i$, \ $\hat{x}_{k} \gets \hat{x}_{k,0|i_{*,k}}$, \ $\bar{x}_{k+2-\ell} \gets \hat{x}_{k,2-\ell|i_{*,k}}$.
			\State Compute $P_{k+2-\ell}$ by \eqref{eq_pk1} with $\bar A_{k+1-\ell}$, $\bar C_{k+1-\ell}$.
			\EndFor
		\end{algorithmic}
	\end{algorithm}

	\section{Theoretical Analysis}
	\label{sec:analysis}
	For all $k \in \BBN$, let the SCDC matrices along the true trajectory be defined by $A_k \triangleq A(x_k, u_k, k)$, $B_k \triangleq B(x_k,u_k,k)$, and $C_k \triangleq C(x_k,k)$. For all integers $t \ge s \ge 0$, let the state transition matrix be defined by $\Phi(t,s) \triangleq A_{t-1} A_{t-2} \cdots A_{s}$ if $t > s$, and $\Phi(t,t) \triangleq I_n$. Furthermore, for all $k \ge \ell$ and all $j \in \{1-\ell,\ldots,0\}$, let $\Phi_{k,j} \triangleq \Phi(k+j,k+1-\ell)$.
	
	\begin{definition}\rm\label{def:observ}
		The system \eqref{eq_sys1}--\eqref{eq_meas1} with the SCDC factorization \eqref{eq:scdc_f}--\eqref{eq:scdc_h} is \textit{uniformly observable} over the horizon $\ell$ if there exists $\alpha \in (0, \infty)$ such that, for all $k\ge \ell$,
		\begin{equation}
			\mathcal{O}_{k} \triangleq \sum_{j=1-\ell}^{0} \Phi_{k,j}^\rmT C_{k+j}^\rmT R_{k+j}^{-1} C_{k+j} \Phi_{k,j} \succeq \alpha I_n. \label{eq:gramian}
		\end{equation}
	\end{definition}
	
	We consider the following assumptions:
	
	\begin{enumerate}[label=(A\arabic*),leftmargin=30pt]
		\item \label{assum:compact}
		The set $\mathcal{U}$ is compact, and there exists a nonempty compact set $\mathcal{X} \subset \BBR^n$ such that, for all $k \in \BBN$, $x_k \in \mathcal{X}$.
		
		\item \label{assum:bounded_noise}
		There exist $\bar{w}, \bar{v} \in (0, \infty)$ such that, for all $k \in \BBN$, $\|w_k\| \le \bar{w}$ and $\|v_k\| \le \bar{v}$.
		
		\item \label{assum:lipschitz}
		There exist $L_A, L_B, L_C \in [0, \infty)$ such that, for all $x, y \in \mathcal{X}$, all $u \in \mathcal{U}$, and all $k \in \BBN$, $\|A(x,u,k) - A(y,u,k)\| \le L_A \|x-y\|$, $\|B(x,u,k) - B(y,u,k)\| \le L_B \|x-y\|$, and $\|C(x,k) - C(y,k)\| \le L_C \|x-y\|$.
		
		\item \label{assum:bounded_A}
		There exist $\bar{a}, \bar{c} \in (0, \infty)$ such that, for all $x \in \mathcal{X}$, all $u \in \mathcal{U}$, and all $k \in \BBN$, $\|A(x,u,k)\| \le \bar{a}$ and $\|C(x,k)\| \le \bar{c}$.
		
		\item \label{assum:observ}
		The system \eqref{eq_sys1}--\eqref{eq_meas1} with the SCDC factorization \eqref{eq:scdc_f}--\eqref{eq:scdc_h} is uniformly observable over the horizon $\ell$.
		
		\item \label{assum:cov_bounds}
		There exist $\underline{q}, \bar q, \underline{r} \in (0,\infty)$ such that, for all $k \in \BBN$, $\underline{q}\, I_n \preceq Q_k \preceq \bar{q}\, I_n$ and $R_k \succeq \underline{r}\, I_p$.
	\end{enumerate}
	
	Assumptions \ref{assum:lipschitz} and \ref{assum:bounded_A} are satisfied if $\mathcal{X}$ is convex and $A$, $B$, and $C$ are continuously differentiable in $x$ and bounded on $\mathcal{X} \times \mathcal{U}$ together with their derivatives, uniformly in $k$. Let $\bar x_{\mathcal{X}} \triangleq \max_{x \in \mathcal{X}} \|x\|$, $\delta_{\mathcal{X}} \triangleq \max_{x,y\in\mathcal{X}}\|x-y\|$, and $\bar u \triangleq \max_{u\in\mathcal{U}}\|u\|$, and define
	\begin{equation}
		\bar\Delta_x \triangleq \delta_{\mathcal{X}}\big(L_A \bar x_{\mathcal{X}} + L_B \bar u\big), \qquad \bar\Delta_y \triangleq \delta_{\mathcal{X}} L_C \bar x_{\mathcal{X}}. \label{eq:Delta_bars}
	\end{equation}
	These constants bound the SCDC mismatch between the true and an estimated trajectory in $\mathcal{X}$, and they vanish if $A$, $B$, and $C$ do not depend on the state.
	
	\subsection{Well-Posedness of the Quadratic Program}
	
	\begin{lemma}\rm\label{lem:qp_unique}
		Let $k \ge \ell$ and $\xi \in \BBR^{n\ell}$. Then, $J_k$ has a unique minimizer on $\mathcal{Z}_k(\xi)$.
	\end{lemma}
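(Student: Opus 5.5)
The plan is to eliminate the equality constraints and reduce the problem to an unconstrained strictly convex quadratic in a small set of free variables. With $\xi$ fixed, the matrices $A_{k,j}(\xi)$, $B_{k,j}(\xi)$, and $C_{k,j}(\xi)$ are constant. The constraints \eqref{eq:iscd_MHSE_dynamics}--\eqref{eq:iscd_MHSE_measurement} then express every $\chi_{j+1}$ and every $\nu_j$ as an affine function of $\chi_{1-\ell}$ and $\omega_{1-\ell},\ldots,\omega_{-1}$. Concretely, define the free variable $\theta \triangleq [\chi_{1-\ell}^\rmT \ \omega_{1-\ell}^\rmT \ \cdots \ \omega_{-1}^\rmT]^\rmT \in \BBR^{n\ell}$. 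Forward recursion of the dynamics constraint gives $\chi_j = \Psi_j\theta + d_j$, with $\Psi_j$ built from products of the $A_{k,i}(\xi)$ and $d_j$ from the $B_{k,i}(\xi)u_{k+i}$ terms. The measurement constraint then gives $\nu_j = y_{k+j} - C_{k,j}(\xi)(\Psi_j\theta + d_j)$.

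This defines an affine map $\theta \mapsto \zeta(\theta) = M\theta + m$, and I will check that it is a bijection from $\BBR^{n\ell}$ onto $\mathcal{Z}_k(\xi)$. Every $\theta$ produces a feasible $\zeta$. Conversely, every feasible $\zeta$ is recovered from its own $\theta$-components, because the recursion determines the remaining blocks uniquely. Hence $\mathcal{Z}_k(\xi)$ is nonempty, and it is an affine subspace. Minimizing $J_k$ over $\mathcal{Z}_k(\xi)$ is therefore equivalent to minimizing $g(\theta) \triangleq J_k(M\theta+m)$ over $\BBR^{n\ell}$.

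Next I write $g$ explicitly:
\[
g(\theta) = \|\chi_{1-\ell}-\bar x_{k+1-\ell}\|^2_{P_{k+1-\ell}^{-1}} + \sum_{j=1-\ell}^{-1}\|\omega_j\|^2_{Q_{k+j}^{-1}} + \sum_{j=1-\ell}^{0}\|\nu_j(\theta)\|^2_{R_{k+j}^{-1}}.
\]
The first two groups of terms depend on the components of $\theta$ directly. Their combined Hessian is $2\,\mathrm{blkdiag}(P_{k+1-\ell}^{-1}, Q_{k+1-\ell}^{-1},\ldots,Q_{k-1}^{-1})$, which is positive definite: $P_{k+1-\ell}\succ 0$ was shown in Section~\ref{sec:arrival}, and each $Q_k \succ 0$ by assumption. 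The last group is a convex quadratic in $\theta$, so its Hessian is positive semidefinite. The Hessian of $g$ is thus positive definite, and $g$ is a strictly convex, coercive quadratic. It therefore has exactly one minimizer, namely $\theta^\star = -H^{-1}b$, where $H$ and $b$ are the Hessian and linear coefficient of $g$. Since the parameterization is bijective, $\zeta(\theta^\star)$ is the unique minimizer of $J_k$ on $\mathcal{Z}_k(\xi)$.

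The one step needing care is the bijectivity claim, because it is what carries uniqueness back from $\theta$ to $\zeta$. Without it, the singular Hessian of $J_k$ in $\zeta$ could in principle leave a family of minimizers. The point is that the directions left unpenalized by $J_k$, namely $\chi_{2-\ell},\ldots,\chi_0$, are pinned down by the constraints. I will argue this directly. If $\zeta,\zeta' \in \mathcal{Z}_k(\xi)$ share the same $\chi_{1-\ell}$ and $\omega$ blocks, then induction on $j$ using the dynamics constraint gives $\chi_j = \chi_j'$ for all $j$. The measurement constraint then gives $\nu_j = \nu_j'$.
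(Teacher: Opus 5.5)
Your proof is correct, but you eliminate a different set of variables than the paper does.

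The paper takes the state block $\chi$ as the free coordinate. Given $\chi$, the constraints define $\omega_j = \chi_{j+1} - A_{k,j}(\xi)\chi_j - B_{k,j}(\xi)u_{k+j}$ and $\nu_j = y_{k+j} - C_{k,j}(\xi)\chi_j$ explicitly, so bijectivity is immediate. The work then goes into showing that the quadratic part $W_k$ of the reduced cost \eqref{eq:Vk} is positive definite. The paper does this with a kernel argument: $W_k(\chi)=0$ forces $\chi_{1-\ell}=0_n$ and then $\chi_{j+1}=A_{k,j}(\xi)\chi_j$ inductively.

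You instead use the condensed coordinates $\theta = (\chi_{1-\ell},\omega_{1-\ell},\ldots,\omega_{-1})$. Positive definiteness is then immediate, because $P_{k+1-\ell}^{-1}$ and the $Q_{k+j}^{-1}$ act directly on $\theta$. The same forward-recursion induction moves into your bijectivity step. The two reduced problems are related by an invertible block unit-lower-triangular change of variables, so the mathematical content is the same.

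Your route yields a Hessian bound $H \succeq 2\,\mathrm{blkdiag}(P_{k+1-\ell}^{-1},Q_{k+1-\ell}^{-1},\ldots,Q_{k-1}^{-1})$ that does not depend on $\xi$. That is convenient if one later wants uniform conditioning or sensitivity estimates.

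The paper's route keeps the decision variable equal to the state trajectory. That is exactly the block on which the iteration map $\mathcal{M}_k$ acts, so its minimizer is $\mathcal{M}_k(\xi)$ directly. It also retains the sparse, block-banded structure behind the paper's claim of per-iteration cost linear in $\ell$. Your condensed Hessian is generally dense, since the measurement terms couple every earlier $\omega_j$.
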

	\begin{proof}
		It follows from \eqref{eq:iscd_MHSE_dynamics} and \eqref{eq:iscd_MHSE_measurement} that the map that assigns to each $\zeta \in \mathcal{Z}_k(\xi)$ its state block $\chi$ is a bijection from $\mathcal{Z}_k(\xi)$ onto $\BBR^{n\ell}$ with an affine inverse. Thus, minimizing $J_k$ on $\mathcal{Z}_k(\xi)$ is equivalent to minimizing over $\chi \in \BBR^{n\ell}$ the function $V_k \colon \BBR^{n\ell} \to [0,\infty)$ defined by
		\begin{align}
			V_k(\chi) &\triangleq \|\chi_{1-\ell} - \bar x_{k+1-\ell}\|^2_{P^{-1}_{k+1-\ell}} \nn\\
			&\quad + \sum_{j=1-\ell}^{-1} \|\chi_{j+1} - A_{k,j}(\xi)\chi_j - B_{k,j}(\xi)u_{k+j}\|^2_{Q^{-1}_{k+j}} \nn\\
			&\quad + \sum_{j=1-\ell}^{0} \|y_{k+j} - C_{k,j}(\xi)\chi_j\|^2_{R^{-1}_{k+j}}. \label{eq:Vk}
		\end{align}
		The function $V_k$ is quadratic in $\chi$, and its quadratic part is the form $W_k(\chi) \triangleq \|\chi_{1-\ell}\|^2_{P^{-1}_{k+1-\ell}} + \sum_{j=1-\ell}^{-1} \|\chi_{j+1} - A_{k,j}(\xi)\chi_j\|^2_{Q^{-1}_{k+j}} + \sum_{j=1-\ell}^{0} \|C_{k,j}(\xi)\chi_j\|^2_{R^{-1}_{k+j}}$. Let $\chi \in \BBR^{n\ell}$ be such that $W_k(\chi) = 0$. Since $P_{k+1-\ell} \succ 0$, it follows that $\chi_{1-\ell} = 0_n$. Since, in addition, for all $j \in \{1-\ell,\ldots,-1\}$, $Q_{k+j} \succ 0$, it follows that, for all $j \in \{1-\ell,\ldots,-1\}$, $\chi_{j+1} = A_{k,j}(\xi)\chi_j$, and thus, by induction on $j$, $\chi = 0_{n\ell}$. Therefore, $W_k$ is positive definite, $V_k$ is strictly convex and radially unbounded, and $V_k$ has a unique minimizer on $\BBR^{n\ell}$. Thus, $J_k$ has a unique minimizer on $\mathcal{Z}_k(\xi)$.
	\end{proof}
	
	For all $k \ge \ell$, let $\mathcal{M}_k \colon \BBR^{n\ell} \to \BBR^{n\ell}$ denote the iteration map that assigns to each $\xi \in \BBR^{n\ell}$ the state block of the unique minimizer of $J_k$ on $\mathcal{Z}_k(\xi)$. Then, for all $k \ge \ell$ and all $i \in \{1,\ldots,\rho\}$, $\hat{x}_{k|i} = \mathcal{M}_k(\hat{x}_{k|i-1})$.

	\subsection{Convergence of the Iteration}
	
	For all $k \ge \ell$, let $X_k \triangleq \big[ x_{k+1-\ell}^\rmT \ \cdots \ x_k^\rmT \big]^\rmT \in \BBR^{n\ell}$ denote the true state trajectory over the window.
	
	\begin{lemma}\rm\label{lem:convergence}
		Let $k\ge\ell$ and $\hat x_{k|0} \in \BBR^{n\ell}$, and, for all $i \ge 1$, let $\hat x_{k|i} \triangleq \mathcal{M}_k(\hat x_{k|i-1})$ and $\delta_{k|i} \triangleq \|\hat x_{k|i} - \hat x_{k|i-1}\|$. Assume that there exist $r\in(0,\infty)$ and $L_M \in (0, 1)$ such that, for all $\xi, \xi' \in \bar{\mathcal{N}}_r(X_k)$, $\|\mathcal{M}_k(\xi) - \mathcal{M}_k(\xi')\| \le L_M \|\xi - \xi'\|$, and such that $\|\mathcal{M}_k(X_k) - X_k\| \le (1-L_M)\, r$. Then, the following statements hold:
		\begin{enumerate}[label=(\roman*),leftmargin=20pt]
			\item There exists a unique $\hat x_{k,*} \in \bar{\mathcal{N}}_r(X_k)$ such that $\mathcal{M}_k(\hat x_{k,*}) = \hat x_{k,*}$.
			\item If $\hat x_{k|0} \in \bar{\mathcal{N}}_r(X_k)$, then, for all $i \ge 1$, $\|\hat x_{k|i} - \hat x_{k,*}\| \le L_M^i \|\hat x_{k|0} - \hat x_{k,*}\|$ and $\|\hat x_{k|i} - \hat x_{k,*}\| \le \frac{L_M}{1-L_M}\, \delta_{k|i}$.
			\item The minimizer $\hat z_{k,*}$ of $J_k$ on $\mathcal{Z}_k(\hat x_{k,*})$ satisfies $\hat z_{k,*} \in \mathcal{Z}_k$.
		\end{enumerate}
	\end{lemma}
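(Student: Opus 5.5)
The plan is to apply the Banach fixed-point theorem on the closed ball $\bar{\mathcal{N}}_r(X_k)$, which is a complete metric space with the Euclidean norm. First I will show that $\mathcal{M}_k$ maps the ball into itself. For $\xi \in \bar{\mathcal{N}}_r(X_k)$, the triangle inequality together with the contraction hypothesis gives
\begin{equation*}
\|\mathcal{M}_k(\xi) - X_k\| \le \|\mathcal{M}_k(\xi) - \mathcal{M}_k(X_k)\| + \|\mathcal{M}_k(X_k) - X_k\| \le L_M r + (1-L_M) r = r.
\end{equation*}
Since $L_M<1$, $\mathcal{M}_k$ is then a contraction of $\bar{\mathcal{N}}_r(X_k)$ into itself. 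The contraction principle yields existence and uniqueness of a fixed point $\hat x_{k,*}$ in the ball, which is statement (i).

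For (ii), invariance of the ball implies by induction that every iterate $\hat x_{k|i}$ lies in $\bar{\mathcal{N}}_r(X_k)$ whenever $\hat x_{k|0}$ does. Then $\|\hat x_{k|i} - \hat x_{k,*}\| = \|\mathcal{M}_k(\hat x_{k|i-1}) - \mathcal{M}_k(\hat x_{k,*})\| \le L_M \|\hat x_{k|i-1} - \hat x_{k,*}\|$, and iterating this gives the geometric bound. For the a posteriori bound, I start from
\begin{equation*}
\|\hat x_{k|i} - \hat x_{k,*}\| \le L_M \|\hat x_{k|i-1} - \hat x_{k,*}\| \le L_M\big(\delta_{k|i} + \|\hat x_{k|i} - \hat x_{k,*}\|\big).
\end{equation*}
Rearranging gives $(1-L_M)\|\hat x_{k|i} - \hat x_{k,*}\| \le L_M \delta_{k|i}$, which is the claimed estimate.

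For (iii), the key observation is that a fixed point makes the frozen SCDC matrices agree with the trajectory itself. Let $\hat z_{k,*}$ be the unique minimizer from Lemma~\ref{lem:qp_unique} on $\mathcal{Z}_k(\hat x_{k,*})$. Its state block is $\mathcal{M}_k(\hat x_{k,*}) = \hat x_{k,*}$, so its state variables satisfy $\chi_j = \hat x_{k,*,j}$. The constraints \eqref{eq:iscd_MHSE_dynamics}--\eqref{eq:iscd_MHSE_measurement} therefore read
\begin{equation*}
\chi_{j+1} = A(\chi_j,u_{k+j},k+j)\chi_j + B(\chi_j,u_{k+j},k+j)u_{k+j} + \omega_j, \qquad y_{k+j} = C(\chi_j,k+j)\chi_j + \nu_j.
\end{equation*}
By the exact factorizations \eqref{eq:scdc_f}--\eqref{eq:scdc_h}, these are exactly \eqref{eq:MHSE_dynamics}--\eqref{eq:MHSE_measurement}, so $\hat z_{k,*} \in \mathcal{Z}_k$.

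The main subtlety I expect is in (iii). The factorization \eqref{eq:scdc_f} is assumed only for $u \in \mathcal{U}$, but the known inputs $u_{k+j}$ lie in $\mathcal{U}$, so it applies as needed. Otherwise, the proof is a routine application of the contraction principle; the only care required is establishing invariance of the ball before invoking it.
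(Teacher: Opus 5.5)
Your proposal is correct and follows the paper's proof essentially step for step. Both use invariance of $\bar{\mathcal{N}}_r(X_k)$ via the triangle inequality, the contraction principle for (i) and the geometric bound, the same rearrangement of $\|\hat x_{k|i} - \hat x_{k,*}\| \le L_M(\delta_{k|i} + \|\hat x_{k|i} - \hat x_{k,*}\|)$ for the a posteriori bound, and, for (iii), the observation that at a fixed point the SCDC matrices are frozen at the state blocks themselves, so the exact factorization (valid because $u_{k+j} \in \mathcal{U}$, which you note and the paper leaves implicit) recovers the nonlinear constraints.
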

	\begin{proof}
		Let $\xi \in \bar{\mathcal{N}}_r(X_k)$. It follows from the triangle inequality that $\|\mathcal{M}_k(\xi) - X_k\| \le \|\mathcal{M}_k(\xi) - \mathcal{M}_k(X_k)\| + \|\mathcal{M}_k(X_k) - X_k\| \le L_M r + (1-L_M) r = r$. Thus, $\mathcal{M}_k$ is a contraction that maps the complete metric space $\bar{\mathcal{N}}_r(X_k)$ into itself, and (i) and the first bound in (ii) follow from the Banach fixed-point theorem \cite[Theorem B.1]{khalil2002nonlinear}. To show the second bound in (ii), let $\hat x_{k|0} \in \bar{\mathcal{N}}_r(X_k)$ and $i \ge 1$. Since $\mathcal{M}_k$ maps $\bar{\mathcal{N}}_r(X_k)$ into itself, $\hat x_{k|i-1} \in \bar{\mathcal{N}}_r(X_k)$, and it follows from the contraction property and the triangle inequality that $\|\hat x_{k|i} - \hat x_{k,*}\| = \|\mathcal{M}_k(\hat x_{k|i-1}) - \mathcal{M}_k(\hat x_{k,*})\| \le L_M \big(\delta_{k|i} + \|\hat x_{k|i} - \hat x_{k,*}\|\big)$, which implies the bound.
		
		To show (iii), note that the state block of $\hat z_{k,*}$ is $\mathcal{M}_k(\hat x_{k,*}) = \hat x_{k,*}$. Therefore, the SCDC matrices in the constraints \eqref{eq:iscd_MHSE_dynamics}--\eqref{eq:iscd_MHSE_measurement} satisfied by $\hat z_{k,*}$ are evaluated at the state blocks of $\hat z_{k,*}$ themselves, and \eqref{eq:scdc_f}--\eqref{eq:scdc_h} imply that these constraints coincide with \eqref{eq:MHSE_dynamics}--\eqref{eq:MHSE_measurement}. Thus, $\hat z_{k,*} \in \mathcal{Z}_k$.
	\end{proof}
	
	\begin{remark}\rm\label{rem:fixed_point}
		Statement (iii) shows that the fixed point is feasible for the nonlinear problem \eqref{eq:MHSE_cost}. It is, in general, not a stationary point of \eqref{eq:MHSE_cost}, because the stationarity conditions of the quadratic program involve the SCDC matrices, whereas those of \eqref{eq:MHSE_cost} involve the Jacobians of $f$ and $h$, which, if $A$, $B$, and $C$ are differentiable in $x$, differ from the SCDC matrices by the matrices whose $q$th columns are $(\partial A/\partial x_{(q)})\,x + (\partial B/\partial x_{(q)})\,u$ and $(\partial C/\partial x_{(q)})\,x$, where $x_{(q)}$ is the $q$th component of $x$. The same discrepancy separates the SDRE and Hamilton--Jacobi conditions in SDRE control \cite{cimen2008sdre}, and it vanishes if $A$, $B$, and $C$ do not depend on the state. The contraction condition is a small-gain condition on the sensitivity of the minimizer of \eqref{eq:Vk} to its data \cite[Sec.~12.8]{nocedal2006numerical}, which depends on $L_A$, $L_B$, $L_C$, the magnitudes of the states and inputs, and the relative weights; a common scaling of all weights leaves $\mathcal{M}_k$ unchanged. The second bound in (ii) relates $\delta_{k|i}$, the quantity tested in \eqref{eq:istar}, to the distance to the fixed point.
	\end{remark}

	\subsection{Bounded Estimation Error}
	
	\begin{theorem}\rm\label{thm:bounded_error}
		Consider \eqref{eq_sys1}--\eqref{eq_meas1} with the SCDC factorization \eqref{eq:scdc_f}--\eqref{eq:scdc_h}, and consider Algorithm~\ref{alg:iscd_mhse}. Assume that \ref{assum:compact}--\ref{assum:cov_bounds} are satisfied, and let $\alpha$ be the constant in Definition~\ref{def:observ}. Furthermore, assume that, for all $k \ge \ell$ and all $i \in \{0,\ldots,i_{*,k}\}$, $\hat x_{k|i} \in \mathcal{X}^\ell$. Finally, assume that there exists $\beta \in (0, \infty)$ such that, for all $k \ge \ell$, $\|x_{k+1-\ell} - \bar{x}_{k+1-\ell}\|^2_{P_{k+1-\ell}^{-1}} \le \beta$. Then, there exists $\gamma \in (0,\infty)$ such that, for all $k \ge \ell$, all $i \in \{1,\ldots,i_{*,k}\}$, and all $j \in \{1-\ell,\ldots,0\}$,
		\begin{equation}
			\|x_{k+j} - \hat{x}_{k,j|i}\| \le \gamma. \label{eq:thm_bound}
		\end{equation}
		In particular, for all $k \ge \ell$, $\|x_k - \hat x_k\| \le \gamma$.
	\end{theorem}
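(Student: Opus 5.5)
The plan is to compare the optimal cost of the quadratic program \eqref{eq:iscd_MHSE_cost_iter} with the cost of a feasible candidate built from the true trajectory. The resulting cost bound then controls the noise estimates, and the uniform observability \ref{assum:observ} converts this into a bound on the state error. Throughout, fix $k \ge \ell$ and $i \in \{1,\ldots,i_{*,k}\}$, and write $\xi \triangleq \hat x_{k|i-1}$. By hypothesis, $\xi \in \mathcal{X}^\ell$ and $\hat x_{k|i} \in \mathcal{X}^\ell$.

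\emph{Step 1 (feasible candidate).} I will set $\chi_j = x_{k+j}$ and define
\begin{equation*}
\omega_j \triangleq w_{k+j} + \big(A_{k+j}-A_{k,j}(\xi)\big)x_{k+j} + \big(B_{k+j}-B_{k,j}(\xi)\big)u_{k+j},
\end{equation*}
\begin{equation*}
\nu_j \triangleq v_{k+j} + \big(C_{k+j}-C_{k,j}(\xi)\big)x_{k+j}.
\end{equation*}
By \eqref{eq:scdc_f}--\eqref{eq:scdc_h}, this $\zeta$ lies in $\mathcal{Z}_k(\xi)$. Since $x_{k+j},\xi_j \in \mathcal{X}$, assumption \ref{assum:lipschitz} and \eqref{eq:Delta_bars} give $\|\omega_j\| \le \bar w + \bar\Delta_x$ and $\|\nu_j\| \le \bar v + \bar\Delta_y$. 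Combining this with \ref{assum:cov_bounds} and the arrival-cost bound $\beta$ yields $J_k(\hat z_{k|i}) \le \bar J$, where
\begin{equation*}
\bar J \triangleq \beta + (\ell-1)(\bar w+\bar\Delta_x)^2/\underline q + \ell(\bar v+\bar\Delta_y)^2/\underline r .
\end{equation*}
This constant is independent of $k$ and $i$. Consequently, $\|\hat w_{k,j|i}\| \le \sqrt{\bar q \bar J}$ and $\|\hat v_{k,j|i}\|_{R_{k+j}^{-1}} \le \sqrt{\bar J}$.

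\emph{Step 2 (error dynamics with the true SCDC matrices).} Let $e_j \triangleq x_{k+j} - \hat x_{k,j|i}$. The key choice is to write the error recursion with the matrices $A_k, C_k$ along the true trajectory, because the Gramian \eqref{eq:gramian} is defined along that trajectory. Subtracting the QP constraints from \eqref{eq_sys1}--\eqref{eq_meas1} gives $e_{j+1} = A_{k+j} e_j + d_j$ and $C_{k+j} e_j = \eta_j$, where
\begin{equation*}
d_j \triangleq (A_{k+j}-A_{k,j}(\xi))\hat x_{k,j|i} + (B_{k+j}-B_{k,j}(\xi))u_{k+j} + w_{k+j} - \hat w_{k,j|i},
\end{equation*}
\begin{equation*}
\eta_j \triangleq \hat v_{k,j|i} - v_{k+j} + (C_{k,j}(\xi) - C_{k+j})\hat x_{k,j|i}.
\end{equation*}
Since $\hat x_{k,j|i} \in \mathcal{X}$, the mismatch terms are again bounded by $\bar\Delta_x$ and $\bar\Delta_y$. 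Hence $\|d_j\| \le \bar d \triangleq \bar\Delta_x + \bar w + \sqrt{\bar q\bar J}$, and $\|\eta_j\|_{R_{k+j}^{-1}} \le \bar\eta \triangleq \sqrt{\bar J} + (\bar v+\bar\Delta_y)/\sqrt{\underline r}$.

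\emph{Step 3 (observability).} Unrolling the recursion gives $e_j = \Phi_{k,j} e_{1-\ell} + \sum_{s=1-\ell}^{j-1}\Phi(k+j,k+s+1)\, d_s$, and hence
\begin{equation*}
C_{k+j}\Phi_{k,j} e_{1-\ell} = \eta_j - C_{k+j}\sum_{s} \Phi(k+j,k+s+1)\, d_s .
\end{equation*}
By \ref{assum:bounded_A} and \ref{assum:cov_bounds}, the right-hand side has $R_{k+j}^{-1}$-norm at most $\bar\eta + \bar c\,\ell\, \max(1,\bar a)^{\ell}\,\bar d/\sqrt{\underline r} \eqqcolon \mu$. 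Squaring and summing over $j$, Definition~\ref{def:observ} gives $\alpha\|e_{1-\ell}\|^2 \le e_{1-\ell}^\rmT\mathcal{O}_k e_{1-\ell} \le \ell\mu^2$. Propagating through the recursion then bounds every $e_j$ by
\begin{equation*}
\gamma \triangleq \max(1,\bar a)^{\ell}\big(\sqrt{\ell/\alpha}\,\mu + \ell\bar d\big),
\end{equation*}
which is uniform in $k$, $i$, and $j$. Taking $j=0$ and $i=i_{*,k}$ gives the final claim $\|x_k - \hat x_k\| \le \gamma$.

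The main obstacle is Step 2. The QP constraints involve the SCDC matrices frozen along $\xi$, whereas observability holds only along the true trajectory. The hypothesis that the iterates $\hat x_{k|i}$ and $\hat x_{k|i-1}$ stay in $\mathcal{X}^\ell$ is exactly what makes the resulting mismatch terms bounded by $\bar\Delta_x$ and $\bar\Delta_y$, independently of $k$. I expect the rest to be bookkeeping of constants.
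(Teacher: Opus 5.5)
Your proof is correct and follows essentially the same route as the paper: the same feasible candidate built from the true trajectory, the same uniform cost bound $\bar J$, and the same error recursion written with the true SCDC matrices and unrolled against the Gramian $\mathcal{O}_k$. The only difference is cosmetic. You isolate $C_{k+j}\Phi_{k,j}e_{1-\ell}$ and apply the triangle inequality using $\|\hat v_{k,j|i}\|_{R_{k+j}^{-1}}\le\sqrt{\bar J}$, whereas the paper lower-bounds $J_{k|i}$ via $\|a+b\|_W^2\ge\tfrac{1}{2}\|a\|_W^2-\|b\|_W^2$; both give a $\gamma$ that is uniform in $k$, $i$, and $j$.
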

	
	\begin{proof}
		Let $k \ge \ell$ and $i \in \{1,\ldots,i_{*,k}\}$, and let $\xi \triangleq \hat x_{k|i-1} \in \mathcal{X}^\ell$. For all $j \in \{1-\ell,\ldots,0\}$, define $e_{j} \triangleq x_{k+j} - \hat x_{k,j|i}$, and define $J_{k|i} \triangleq J_k(\hat z_{k|i}) = \min_{\zeta \in \mathcal{Z}_k(\xi)} J_k(\zeta)$.
		
		Let $\tilde z \in \BBR^{n_z}$ be the decision vector whose state block is $X_k$ and whose disturbance blocks are, for all $j \in \{1-\ell,\ldots,-1\}$, $\tilde\omega_j \triangleq x_{k+j+1} - A_{k,j}(\xi)x_{k+j} - B_{k,j}(\xi)u_{k+j}$, and, for all $j \in \{1-\ell,\ldots,0\}$, $\tilde\nu_j \triangleq y_{k+j} - C_{k,j}(\xi) x_{k+j}$. By construction, $\tilde z \in \mathcal{Z}_k(\xi)$. It follows from \eqref{eq_sys1}, \eqref{eq_meas1}, \eqref{eq:scdc_f}, and \eqref{eq:scdc_h} that, for all $j \in \{1-\ell,\ldots,-1\}$, $\tilde\omega_j = w_{k+j} + (A_{k+j} - A_{k,j}(\xi))x_{k+j} + (B_{k+j} - B_{k,j}(\xi))u_{k+j}$, and, for all $j\in\{1-\ell,\ldots,0\}$, $\tilde \nu_j = v_{k+j} + (C_{k+j} - C_{k,j}(\xi))x_{k+j}$. Since \ref{assum:compact}, \ref{assum:bounded_noise}, and \ref{assum:lipschitz} are satisfied and $\xi \in \mathcal{X}^\ell$, it follows from \eqref{eq:Delta_bars} that, for all $j \in \{1-\ell,\ldots,-1\}$, $\|\tilde\omega_j\| \le \bar w + \bar\Delta_x$, and, for all $j \in \{1-\ell,\ldots,0\}$, $\|\tilde\nu_j\| \le \bar v + \bar \Delta_y$. Since \ref{assum:cov_bounds} is satisfied, for all $j \in \{1-\ell,\ldots,0\}$, $Q_{k+j}^{-1} \preceq \underline{q}^{-1} I_n$ and $R_{k+j}^{-1} \preceq \underline{r}^{-1} I_p$. Therefore, \eqref{eq_JK} and the hypothesis on $\beta$ imply
		\begin{align}
			J_{k|i} \le J_k(\tilde z) \le \bar J &\triangleq \beta + (\ell-1)\,\underline{q}^{-1} (\bar w + \bar\Delta_x)^2 \nn\\
			&\quad + \ell\, \underline{r}^{-1} (\bar v + \bar\Delta_y)^2. \label{eq:Jki_bound}
		\end{align}
		Since each term in \eqref{eq_JK} is nonnegative and \ref{assum:cov_bounds} implies that, for all $j\in\{1-\ell,\ldots,-1\}$, $Q_{k+j}^{-1} \succeq \bar q^{-1} I_n$, it follows that $J_{k|i} \ge \bar q^{-1} \sum_{j=1-\ell}^{-1} \|\hat w_{k,j|i}\|^2$, and thus \eqref{eq:Jki_bound} implies that, for all $j\in\{1-\ell,\ldots,-1\}$,
		\begin{equation}
			\|\hat w_{k,j|i}\| \le \bar\omega \triangleq \sqrt{\bar q\, \bar J}. \label{eq:omega_bar}
		\end{equation}
		
		Since $\hat z_{k|i} \in \mathcal{Z}_k(\xi)$, it follows from \eqref{eq_sys1}, \eqref{eq:scdc_f}, and \eqref{eq:iscd_MHSE_dynamics} that, for all $j \in \{1-\ell,\dots,-1\}$,
		\begin{equation}
			e_{j+1} = A_{k+j}\, e_{j} + \Delta_{x,j} + w_{k+j} - \hat w_{k,j|i}, \label{eq:err_dyn}
		\end{equation}
		where $\Delta_{x,j} \triangleq \big(A_{k+j} - A_{k,j}(\xi)\big)\hat x_{k,j|i} + \big(B_{k+j} - B_{k,j}(\xi)\big)u_{k+j}$. Since \ref{assum:compact} and \ref{assum:lipschitz} are satisfied and $\xi, \hat x_{k|i} \in \mathcal{X}^\ell$, it follows from \eqref{eq:Delta_bars} that, for all $j \in \{1-\ell,\ldots,-1\}$, $\|\Delta_{x,j}\| \le \bar\Delta_x$. Solving \eqref{eq:err_dyn} recursively from $j = 1-\ell$ yields, for all $j \in \{1-\ell,\dots,0\}$,
		\begin{equation}
			e_{j} = \Phi_{k,j}\, e_{1-\ell} + d_{j}, \label{eq:err_expand}
		\end{equation}
		where $d_{j} \triangleq \sum_{s=1-\ell}^{j-1} \Phi(k+j,k+s+1)\big(\Delta_{x,s} + w_{k+s} - \hat w_{k,s|i}\big)$ and $d_{1-\ell} = 0_n$. Since \ref{assum:compact} and \ref{assum:bounded_A} are satisfied, for all $j \in \{1-\ell,\ldots,0\}$ and all $s \in \{1-\ell,\ldots,j-1\}$, $\Phi(k+j,k+s+1)$ and $\Phi_{k,j}$ are products of at most $\ell-1$ matrices of norm at most $\bar a$, and thus $\|\Phi(k+j,k+s+1)\| \le \bar a_\ell$ and $\|\Phi_{k,j}\| \le \bar a_\ell$, where $\bar a_\ell \triangleq \max(1,\bar a)^{\ell-1}$. Since \ref{assum:bounded_noise} is satisfied, it follows from \eqref{eq:omega_bar} that, for all $j \in \{1-\ell,\ldots,0\}$,
		\begin{equation}
			\|d_{j}\| \le \bar d \triangleq (\ell-1)\, \bar a_\ell\, (\bar\Delta_x + \bar w + \bar\omega). \label{eq:d_bar}
		\end{equation}
		
		Since $\hat z_{k|i} \in \mathcal{Z}_k(\xi)$, it follows from \eqref{eq_meas1}, \eqref{eq:scdc_h}, \eqref{eq:iscd_MHSE_measurement}, and \eqref{eq:err_expand} that, for all $j\in\{1-\ell,\ldots,0\}$,
		\begin{equation}
			\hat v_{k,j|i} = C_{k+j}\, e_{j} + \Delta_{y,j} + v_{k+j} = a_j + b_j, \label{eq:vhat}
		\end{equation}
		where $\Delta_{y,j} \triangleq \big(C_{k+j} - C_{k,j}(\xi)\big)\hat x_{k,j|i}$, $a_{j} \triangleq C_{k+j}\, \Phi_{k,j}\, e_{1-\ell}$, and $b_{j} \triangleq C_{k+j}\, d_{j} + \Delta_{y,j} + v_{k+j}$. Since \ref{assum:compact}--\ref{assum:bounded_A} are satisfied, it follows from \eqref{eq:Delta_bars} and \eqref{eq:d_bar} that, for all $j \in \{1-\ell,\ldots,0\}$, $\|\Delta_{y,j}\| \le \bar\Delta_y$ and $\|b_j\| \le \bar c \bar d + \bar\Delta_y + \bar v$.
		
		For all $a, b \in \BBR^p$ and all $W \succ 0$, the Cauchy--Schwarz inequality and $2\|a\|_W\|b\|_W \le \tfrac{1}{2}\|a\|_W^2 + 2\|b\|_W^2$ imply $\|a+b\|_W^2 \ge \tfrac{1}{2}\|a\|_W^2 - \|b\|_W^2$. Applying this inequality to \eqref{eq:vhat} with $W = R_{k+j}^{-1}$ and using \eqref{eq_JK} and \eqref{eq:gramian} yields
		\begin{equation}
			J_{k|i} \ge \sum_{j=1-\ell}^{0} \|\hat v_{k,j|i}\|^2_{R_{k+j}^{-1}} \ge \tfrac{1}{2} e_{1-\ell}^\rmT \mathcal{O}_k e_{1-\ell} - \!\sum_{j=1-\ell}^{0}\! \|b_j\|^2_{R_{k+j}^{-1}}. \label{eq:J_ab}
		\end{equation}
		Since \ref{assum:observ} and \ref{assum:cov_bounds} are satisfied, $\mathcal{O}_k \succeq \alpha I_n$ and, for all $j\in\{1-\ell,\ldots,0\}$, $R_{k+j}^{-1} \preceq \underline{r}^{-1} I_p$, and thus \eqref{eq:Jki_bound} and \eqref{eq:J_ab} imply
		\begin{equation}
			\|e_{1-\ell}\| \le \gamma_0 \triangleq \sqrt{2(\bar{J} + \mathcal{E})/\alpha}, \label{eq:gamma0}
		\end{equation}
		where $\mathcal{E} \triangleq \ell\,\underline{r}^{-1}\big(\bar c \bar d + \bar\Delta_y + \bar v\big)^2$.
		Finally, let $j \in \{1-\ell,\ldots,0\}$. It follows from \eqref{eq:err_expand}, \eqref{eq:d_bar}, and \eqref{eq:gamma0} that $\|x_{k+j} - \hat x_{k,j|i}\| = \|e_j\| \le \bar a_\ell\, \gamma_0 + \bar d \triangleq \gamma$. Since $\gamma$ depends only on $\ell$, $\alpha$, $\beta$, the bounds in \ref{assum:compact}--\ref{assum:cov_bounds}, and \eqref{eq:Delta_bars}, and not on $k$, $i$, or $j$, \eqref{eq:thm_bound} holds. Setting $j = 0$ and $i = i_{*,k}$ yields $\|x_k - \hat x_k\| \le \gamma$.
	\end{proof}
	
	\begin{remark}\rm\label{rem:thm_discussion}
		Theorem~\ref{thm:bounded_error} bounds the error of each iterate, and thus of the estimate returned by Algorithm~\ref{alg:iscd_mhse} for all $\rho$ and $\varepsilon$, without requiring convergence of the iteration. The hypothesis $\hat x_{k|i} \in \mathcal{X}^\ell$ can be enforced by construction if $\mathcal{X}$ is a polytope: appending the linear constraints $\chi_j \in \mathcal{X}$ to \eqref{eq:iscd_MHSE_cost_iter} and projecting $\hat x_{k|0}$ onto $\mathcal{X}^\ell$ preserves the quadratic programming structure, the strict convexity in Lemma~\ref{lem:qp_unique}, and the feasibility of $\tilde z$. The hypothesis on $\beta$ concerns the arrival cost. Since $\bar x_{k+1-\ell}$ is an estimate produced at the preceding time step, this hypothesis can be replaced by a recursive argument that propagates \eqref{eq:thm_bound} across windows and yields a small-gain condition as in \cite{Rao2003, Alessandri2008}; this extension is left for future work.
	\end{remark}
	
	\section{Numerical Results}
	\label{sec:numerical}
	
	We compare SCD-MHE with the EKF, the UKF, and a nonlinear moving-horizon estimator (N-MHE) that solves \eqref{eq:MHSE_cost} on a quadrotor vertical-dynamics benchmark with a saturating rangefinder. Let $x_k \triangleq \begin{bmatrix} z_k & \dot z_k \end{bmatrix}^\mathrm{T}$, where $z_k$ is the altitude and $\dot z_k$ is the vertical velocity. Using forward Euler integration with the sampling period $T_\rms \triangleq 0.05$~s, the dynamics are
	\begin{align}
		z_{k+1} &= z_k + T_\rms \dot z_k + w_{k,(1)}, \\
		\dot z_{k+1} &= \dot z_k + T_\rms \Big( u_k - g - \frac{c_\rmd}{\bar m} \dot z_k |\dot z_k| \Big) + w_{k,(2)},
	\end{align}
	where $\bar m \triangleq 1.5$~kg is the mass, $g \triangleq 9.81$~m/s$^2$ is the gravitational acceleration, $c_\rmd \triangleq 0.25$~kg/m is the drag coefficient, and $u_k \triangleq g + 0.5\sin(k+1)$~m/s$^2$ is the commanded thrust per unit mass, and $\mathcal{U} \triangleq [g-0.5, g+0.5]$. The dynamics are factored as $f(x_k, u_k, k) = A(x_k)x_k + B(u_k)u_k$, where the SCDC matrices depend on $x_k$ and $u_k$ only and are given by
	\begin{equation}
		A(x_k) \triangleq \begin{bmatrix} 1 & T_\rms \\ 0 & 1 - T_\rms \frac{c_\rmd}{\bar m} |\dot z_k| \end{bmatrix}, \quad B(u_k) \triangleq \begin{bmatrix} 0 \\ T_\rms(1 - g/u_k) \end{bmatrix},
	\end{equation}
	where $B$ absorbs the gravitational offset and is defined on $\mathcal{U}$ because $0 \notin \mathcal{U}$. The altitude measurement saturates at $h_{\max} \triangleq 30$~m, that is, $y_k \triangleq h_{\max} \tanh(z_k/h_{\max}) + v_k$, and the SCDC measurement matrix is $C(x_k) \triangleq \begin{bmatrix} c(z_k) & 0 \end{bmatrix}$, where $c(z) \triangleq h_{\max}\tanh(z/h_{\max})/z$ for $z \neq 0$ and $c(0) \triangleq 1$. The function $c$ is continuous, positive, and even, and $c(z) \ge h_{\max}\tanh(1)/|z|$ for all $|z| \ge h_{\max}$, whereas the measurement Jacobian satisfies $\partial h/\partial z = \sech^2(z/h_{\max}) \le 4 e^{-2|z|/h_{\max}}$; at $z = 100$~m, $c(z) \approx 0.30$ and $\sech^2(100/30) \approx 5\times10^{-3}$. Consequently, for $|\xi_{j,(1)}| \ge h_{\max}$, the coefficient $c(\xi_{j,(1)})$ of the altitude variable in \eqref{eq:iscd_MHSE_measurement} is bounded below by $h_{\max}\tanh(1)/|\xi_{j,(1)}|$, whereas the measurement Jacobian at the same point is bounded above by $4e^{-2|\xi_{j,(1)}|/h_{\max}}$.
	
	The process disturbance $w_k$ and the measurement noise $v_k$ are zero-mean Gaussian with covariances $Q \triangleq \diag(10^{-3}, 5\times 10^{-2})$ and $R \triangleq 0.5$; the simulation compares the estimators rather than evaluating the constants of Theorem~\ref{thm:bounded_error}. The true initial state is $x_0 \triangleq \begin{bmatrix} 10 & 0 \end{bmatrix}^\mathrm{T}$. All estimators are initialized with $\hat x_0 \triangleq \begin{bmatrix} 100 & -20 \end{bmatrix}^\mathrm{T}$ and $P_0 \triangleq I_2$. The UKF uses $\alpha_\mathrm{u} \triangleq 10^{-3}$, $\kappa_\mathrm{u} \triangleq 0$, and $\beta_\mathrm{u} \triangleq 2$ \cite{Wan2000}. The SCD-MHE parameters are $\ell \triangleq 12$, $\rho \triangleq 15$, and $\varepsilon \triangleq 10^{-6}$. For $k \in \{1,\ldots,\ell-1\}$, the EKF provides the estimates; for all $j \in \{1-\ell,\ldots,-1\}$, $\hat x_{\ell,j|0}$ is the EKF estimate of $x_{\ell+j}$, and $\hat x_{\ell,0|0} = f(\hat x_{\ell,-1|0}, u_{\ell-1}, \ell-1)$. The prior on $x_0$ serves as the arrival cost of the first window without propagation, that is, $\bar x_1 = \hat x_0$ and $P_1 = P_0$. Each quadratic program is solved by the \texttt{quadprog} interior-point-convex algorithm with dense matrices ($n_z = 58$), and, for numerical conditioning, $10^{-8} I_{n_z}$ is added to the Hessian of $J_k$ and $10^{-5} I_2$ to the arrival cost covariance. The N-MHE is formulated by multiple shooting and solved by the \texttt{fmincon} interior-point algorithm with analytical gradients and constraint Jacobians \cite{rawlings2017model, nocedal2006numerical}, with the same $\ell$, $Q$, $R$, initialization, and regularizations, warm-started with the shifted state trajectory of the preceding step, and with the arrival cost update \eqref{eq_pk1} in which the Jacobians of $f$ and $h$ replace the SCDC matrices. All simulations are executed in MATLAB R2023a on a 3.40~GHz Intel Core i7-13700K processor.
	
	We run 100 Monte Carlo trials of $N \triangleq 120$ steps ($6$~s) each, with $t_k \triangleq k T_\rms$. The root mean squared error (RMSE) is evaluated for $k \ge \ell$, and the per-step execution time of the two moving-horizon estimators is averaged over $k \ge \ell$. Table~\ref{tab:quadrotor_rmse} and Fig.~\ref{fig:quadrotor_tracking} report the results.
	
	\begin{table}[htbp]
		\centering
		\caption{RMSE and Execution Time (Post-Horizon)}
		\label{tab:quadrotor_rmse}
		\begin{tabular}{|l|c|c|c|}
			\hline
			\textbf{Method} & \textbf{Altitude $z$ (m)} & \textbf{Velocity $\dot z$ (m/s)} & \textbf{Time (ms)} \\ \hline
			EKF        & 32.31 & 3.52 & $<0.01$ \\ \hline
			UKF        & 32.34 & 3.51 & $<0.01$ \\ \hline
			N-MHE & 10.26 & 1.95 & 66.16 \\ \hline
			SCD-MHE    & \textbf{0.56}  & \textbf{1.68} & \textbf{1.96} \\ \hline
		\end{tabular}
	\end{table}
	
	At the initial estimated altitude of $100$~m, the measurement Jacobian is $\sech^2(100/30) \approx 5 \times 10^{-3}$, and thus, with $P_0 = I_2$ and $R = 0.5$, the altitude component of the EKF gain is approximately $10^{-2}$. The UKF, whose sigma points lie in the saturated region, behaves in the same way. As shown in Fig.~\ref{fig:quadrotor_tracking}, the EKF and UKF altitude estimates exceed $60$~m for $t \le 1.65$~s and come within $2$~m of the true altitude at $t \approx 2.1$~s; this delay produces their post-horizon altitude RMSE of $32$~m. The N-MHE altitude estimate recovers at $t \approx 0.7$~s, two steps after the first window ($t_\ell = 0.6$~s); its altitude RMSE is $10.26$~m, and its execution time of $66.16$~ms per step exceeds the sampling period $T_\rms = 50$~ms. For the SCD-MHE, the SCDC coefficient $c(100) \approx 0.30$ maps the difference between the measurement $y_k \approx 10$~m and the predicted output $h(100) \approx 29.9$~m into the correction of the altitude estimate, and the iterated quadratic program recovers the true altitude at the first window. The SCD-MHE attains an altitude RMSE of $0.56$~m and a velocity RMSE of $1.68$~m/s at $1.96$~ms per step, which is $34$ times smaller than that of the N-MHE and below the sampling period.
	
	\begin{figure}[htbp]
		\centering
		\includegraphics[width=0.99\columnwidth]{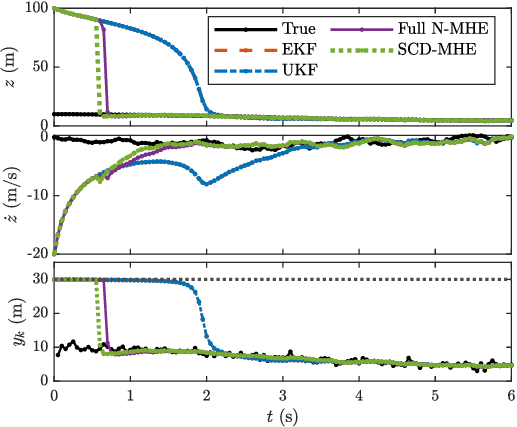}
		\caption{Trajectories of one trial. The EKF and UKF altitude estimates remain in the saturated region of the sensor until $t \approx 1.9$~s; the N-MHE altitude estimate recovers at $t \approx 0.7$~s, and that of the SCD-MHE at the first window ($t = 0.6$~s). The lower panel shows the measured and predicted outputs and the saturation level $h_{\max}$.}
		\label{fig:quadrotor_tracking}
	\end{figure}
	
	
	\section{Conclusion}
	SCD-MHE replaces the nonconvex program of nonlinear MHE by a sequence of quadratic programs that represent the model without truncation and without Jacobians. Each quadratic program has a unique solution, the fixed points of the iteration are feasible for the nonlinear problem, and the estimation error of each iterate is bounded under the hypotheses of Theorem~\ref{thm:bounded_error}.
	
	\bibliographystyle{IEEEtran}
	\bibliography{Ref}
\end{document}